\tikzstyle{every node}=[circle,draw=black, inner sep=1.5pt,fill=white]
\newcommand{\mincov}{{\sf mincov}}
\newcommand{\maxcov}{{\sf maxcov}}
\newcommand{\cov}{{\sf cov}}
\newtheorem{theorem}{Theorem}
\newtheorem{lemma}{Lemma}
\newtheorem{corollary}[theorem]{Corollary}
\newtheorem{problem}{Problem}
\newcommand{\wt}{\;:\;}
\newcommand{\st}{\;|\;}
\renewcommand{\leq}{\leqslant}
\renewcommand{\geq}{\geqslant}
\newcommand{\f}
{\ensuremath{\varphi}}
\begin{document}

\begin{frontmatter}



\title{Interval scheduling maximizing minimum coverage}


\author{Veli M\"akinen}
\author{Valeria Staneva\fnref{label1}}
\author{Alexandru I. Tomescu}
\author{Daniel Valenzuela\corref{cor1}}

\address{Helsinki Institute for Information Technology HIIT \\ Department of Computer Science \\ University of Helsinki, Finland}

\fntext[label1]{Current affiliation MIT -- Massachusetts Institute of Technology (student). Work conducted while visiting University of Helsinki as summer intern.} 
\cortext[cor1]{Corresponding author \ead{dvalenzu@cs.helsinki.fi}}

\begin{abstract}
In the classical interval scheduling type of problems, a set of $n$ jobs, characterized by their start and end time, need to be executed by a set of machines, under various constraints. In this paper we study a new variant in which the jobs need to be assigned to at most $k$ identical machines, such that the minimum number of machines that are busy at the same time is maximized. 
This is relevant in the context of genome sequencing and haplotyping, specifically when a set of DNA reads aligned to a genome needs to be pruned so that no more than $k$ reads overlap, while maintaining as much read coverage as possible across the entire genome. 
We show that the problem can be solved in time $\min\left(O(n^2\log k / \log n),O(nk\log k)\right)$ by using max-flows. We also give an $O(n\log n)$-time approximation algorithm with approximation ratio $\rho =\frac{k}{\lfloor k/2 \rfloor}$. 


\end{abstract}

\begin{keyword}
Interval scheduling \sep read pruning \sep haplotype assembly \sep max-flows



\end{keyword}

\end{frontmatter}


\section{Introduction}
Interval scheduling is a classical problem in combinatorial optimization. The input usually consists of $n$ jobs, such that each job $j$ needs to be executed in the time interval $[s_j , f_j)$, by any available machine. In the most basic variant of this problem, each machine is always available, can process at most one job at a time, and once it starts executing a job it does so until it is finished. The task is to process all jobs using the minimum number of machines~\cite{NAV07}. This is solvable in time $O(n\log n)$~\cite{1675260}. In another problem variant, known as \emph{interval scheduling with given machines},
there are only $k$ available machines, and the execution of each job brings a specified profit. The task is to schedule a maximum-profit subset of jobs. This is also solvable in polynomial time, for example by min-cost flows~\cite{arkin87,bouzina96}. Some problem variants are NP-hard, for example if each job can be executed only by a given subset of machines~\cite{arkin87}, or if each machine is available during a specific period of time~\cite{brucker94}. See the surveys~\cite{NAV07,kovalyov07} for further references.

Most previous work has focused on either maximizing the profit obtained from executing the jobs, or on minimizing the resources used by the jobs. In this paper we study a new problem variant with a rather different objective function, motivated by a new application of interval scheduling in genome haplotyping with high-throughput DNA sequencing. In this variant, which we call \emph{interval scheduling maximizing minimum coverage}, 
we need to select a subset of jobs to be executed by a given number $k$ of machines, such that the minimum, over the number of machines that are busy at any given time, is as large as possible. Fig.~\ref{fig:example} gives an example. To the best of our knowledge this variant has not been addressed before. 

\begin{figure}[t]
\centering
\includegraphics[width=0.9\columnwidth]{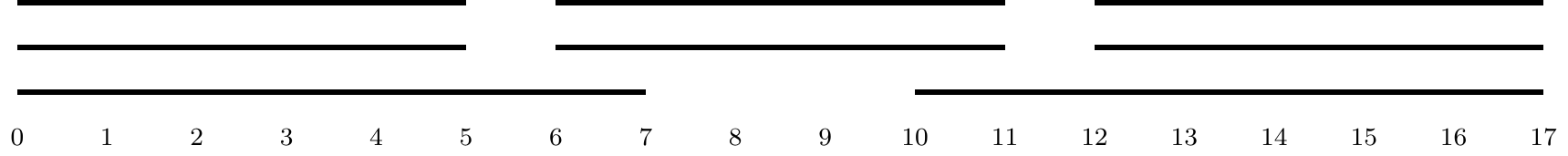}
\caption{An instance of interval scheduling maximizing minimum coverage in which 8 intervals are given and  $k = 2$ machines are available to execute them. Observe that in all solutions to this problem three disjoint intervals of length 5 need to be removed, leading to a solution that executes $5$ jobs and the number of idle machines is never greater than $1$. However, in the solutions to the classical interval scheduling with given machines problem, 
the two intervals of length 7 are removed both in the case when all intervals have the same profit, and in the case when the profit of an interval equals its length.\label{fig:example}}
\end{figure}



The rest of the paper is structured as follows. In Sec.~\ref{sec:pruning} we discuss our original motivation in the context of high-throughput DNA sequencing and give the precise problem formulation. In Sec.~\ref{sec:max-flow} we present a reduction to a max-flow problem, which leads to a $O(n^2 \log k/\log n)$ solution for our problem. In Sec.~\ref{sec:tailored-flow} we present a tailored max-flow algorithm that runs in $O(n k \log k)$ time, which is faster than the previous when $k=o(n/\log n)$. Since for large $k$ the best complexity is almost quadratic, we also study a way to find approximate solutions: In Sec.~\ref{sec:2-approximation} we present an $O(n \log n)$-time  $\frac{k}{\lfloor k/2 \rfloor}$-approximation algorithm. 

\section{Haplotype phasing, read pruning and interval scheduling\label{sec:pruning}}

High-throughput sequencing is a technique developed over the last decade that can produce millions of DNA fragments, called \emph{reads}, from random positions across the genome of an individual. Depending on the technology, their length can be from hundreds to thousands of  characters. Many analyses are carried out by first aligning the reads to a reference genome sequence of the species, and studying, for example, the genetic variations of the individual with respect to the reference (see e.g.~\cite{makinen2015genome}). A more detailed analysis, called \emph{haplotype phasing}, also takes into account the fact that in some species, such as humans, each chromosome is present in two copies, inherited from each parent. In this context it is also desirable to assign the genetic variations to the copy of the chromosome where they are present.

Since real data has sequencing and alignment errors, a well-known problem formulation asks for the minimum number of corrections that enables a consistent partitioning of the input set of reads into the two copies of the chromosome they were sequenced from. This problem is called \emph{minimum error correction} and was introduced by Lippert \emph{et al.} in~\cite{lippert02} and proved NP-hard in~\cite{cilibrasi05}. A practical algorithm for this problem was proposed in~\cite{Pat14}, having a time complexity of $O(2^{k-1} m)$, where $m$ is the proportional to the length of the genome, and $k$ is the maximum number of reads covering any position of the genome. This algorithm is particularly useful because its runtime is independent of the read length. 

The higher the number of reads, and the more uniform they are distributed across the genome, the more accurate the solution to the minimum error correction problem is in practice. However, the $O(2^{k-1} m)$ time complexity makes this algorithm feasible only for small values of $k$. In its implementation~\cite{Pat14}, for every genomic position with too high read coverage, some reads are removed at random. However, this may arbitrarily lead to some other positions having a too low coverage for accurate results. In this paper we study the problem of pruning the read set such that the maximum read coverage is less than a given integer $k$, and the minimum coverage across all genomic positions is as high as possible. 

Our formal definition is as follows. We will represent each read $i$ as an interval~$[s_i, f_i)$. We will assume that $0 \leq s_i < f_i < N$.
Given an interval $[s_i,f_i)$ and a point $p \in [s_i,f_i)$, we say that $[s_i,f_i)$ \emph{covers} $p$. If $p \in (s_i,f_i)$ we say that $[s_i,f_i)$ \emph{strictly covers} $p$. 
Given a set $S = \{ [s_i,f_i) \wt i \in \{1,\dots,n\} \st s_i < f_i\}$ of intervals, and a point $p$ 
we define the \emph{coverage of $p$} as $\cov_S(p) = |\{ [s_i,f_i) \in S |  [s_i,f_i) \text{ covers }  p  \} |$. 
When clear from the context, the subscript $S$ will be omitted.
We also define the  \emph{maximum coverage of $S$} as $\maxcov(S) = max_{p \in [0,N)}{ \cov_S(p)  }$.
Likewise, we define the \emph{minimum coverage of $S$} as $\mincov(S) = min_{p \in [0,N)}{ \cov_S(p)  }$. Our problem is the following one.

\begin{framed}
\begin{problem}[Interval scheduling maximizing minimum coverage]\ \newline
\label{prob:main-problem}
\noindent{\rm\!\textbf{INPUT.}} A set $S = \{ [s_i,f_i) \wt i \in \{1,\dots,n\} \st s_i < f_i\}$ of intervals and an integer $k$.\\
\noindent{\rm\textbf{TASK.}} Find an $S' \subseteq S$ such that $\maxcov(S') \leq k$ and maximizing $\mincov(S')$.
\end{problem}
\end{framed}

Note that if we only keep the first condition, namely $S' \subseteq S$ such that $\maxcov(S') \leq k$ 
our problem would be exactly the one of finding a feasible set of jobs to be scheduled on $k$ machines.

\section{The reduction to max-flows\label{sec:max-flow}}

In this section we show that the problem is solvable in time $O(n^2\log k/\log n)$ by max-flows. First, we consider the decision version of the maximization problem, as follows.

\begin{framed}
\begin{problem}[Interval scheduling with bounded coverage]\ \newline
\label{prob:decision-version}
\noindent{\rm\textbf{\!INPUT.}} A set $S = \{ [s_i,f_i) \wt i \in \{1,\dots,n\} \st s_i < f_i\}$ of intervals, and integers $k,t$.\\
\noindent{\rm\textbf{TASK.}} Decide if there an $S' \subseteq S$ such that $\maxcov(S') \leq k$ and $\mincov(S') \geq t$, and if yes, output such $S'$.
\end{problem}
\end{framed}

%
We now describe the reduction of Problem~\ref{prob:decision-version} to a max-flow problem, following standard network flow notation, as can be found e.g.~in~\cite{makinen2015genome}. See also Fig.~\ref{fig:flow-b} for an example. Let $s = \min_{i \in \{1,\dots,n\}} s_i$ and $f = \max_{i \in \{1,\dots,n\}} f_i$. We construct a graph $G_{S,k,t}$ (possibly having parallel arcs) whose vertex set equals $\{s-1,f+1\} \cup \{s_i,f_i \wt i \in \{1,\dots,n\}\}$. Vertex $s-1$ will be the unique source of the graph, and vertex $f+1$ will be the unique sink of the graph. For every two consecutive numbers $i,j$ in $V(G)$ (i.e., such that there is no number $p$ in $V(G)$ with $i < p < j$), we add the arc $(i,j)$ to $E(G)$. We call these arcs \emph{backbone} arcs. The backbone arcs $(s-1,s)$ and $(f,f+1)$ receive capacity $k$, and the other backbone arcs have capacity $k - t$. For every interval $[s_i,f_i) \in S$, we add to $E(G)$ the arc $(s_i,f_i)$ with capacity $1$. We call these latter arcs \emph{interval} arcs. 

\begin{figure}[t]
\centering
\subfigure[\label{fig:flow-a}]{
\begin{tikzpicture}[scale=1,-]
\draw[clip,draw=none] (-1.35,-1.1) rectangle (11.3,1.1);
\node[draw=none] (0) at (0,0) {\footnotesize $0$};
\node[draw=none] (1) at (1,0) {\footnotesize $1$};
\node[draw=none] (2) at (2,0) {\footnotesize $2$};
\node[draw=none] (3) at (3,0) {\footnotesize $3$};
\node[draw=none] (4) at (4,0) {\footnotesize $4$};
\node[draw=none] (5) at (5,0) {\footnotesize $5$};
\node[draw=none] (6) at (6,0) {\footnotesize $6$};
\node[draw=none] (7) at (7,0) {\footnotesize $7$};
\node[draw=none] (8) at (8,0) {\footnotesize $8$};
\node[draw=none] (9) at (9,0) {\footnotesize $9$};
\node[draw=none] (10) at (10,0) {\footnotesize $10$};
\draw[ultra thick] (0,1) to (8,1);
\draw[ultra thick] (0,-0.5) -- (1.90,-0.5);
\draw[ultra thick] (2,-0.5) -- (6,-0.5);
\draw[ultra thick] (1,-1) -- (10,-1);
\draw[ultra thick] (1,0.5) -- (3,0.5);
\draw[ultra thick] (4,0.5) -- (10,0.5);
\end{tikzpicture}}
\subfigure[\label{fig:flow-b}]{
\begin{tikzpicture}[scale=1,->]
\draw[clip,draw=none] (-1.35,-1.3) rectangle (11.3,1.8);
\node[circle,draw=black,inner sep=-2pt,fill=white,minimum size=5mm] (-1) at (-1,0) {\footnotesize $-1$};
\node[circle,draw=black,inner sep=-2pt,fill=white,minimum size=5mm] (0) at (0,0) {\footnotesize $0$};
\node[circle,draw=black,inner sep=-2pt,fill=white,minimum size=5mm] (1) at (1,0) {\footnotesize $1$};
\node[circle,draw=black,inner sep=-2pt,fill=white,minimum size=5mm] (2) at (2,0) {\footnotesize $2$};
\node[circle,draw=black,inner sep=-2pt,fill=white,minimum size=5mm] (3) at (3,0) {\footnotesize $3$};
\node[circle,draw=black,inner sep=-2pt,fill=white,minimum size=5mm] (4) at (4,0) {\footnotesize $4$};
\node[circle,draw=black,inner sep=-2pt,fill=white,minimum size=5mm] (6) at (6,0) {\footnotesize $6$};
\node[circle,draw=black,inner sep=-2pt,fill=white,minimum size=5mm] (8) at (8,0) {\footnotesize $8$};
\node[circle,draw=black,inner sep=-2pt,fill=white,minimum size=5mm] (10) at (10,0) {\footnotesize $10$};
\node[circle,draw=black,inner sep=-2pt,fill=white,minimum size=5mm] (11) at (11,0) {\footnotesize $11$};
\draw (-1) to node[rectangle,draw=none,fill=none,midway,auto] {\scriptsize $\leq\!3$} (0);
\draw (0) to node[rectangle,draw=none,fill=none,midway,auto] {\scriptsize $\leq\!2$} (1);
\draw (1) to node[rectangle,draw=none,fill=none,midway,auto] {\scriptsize $\leq\!2$} (2);
\draw (2) to node[rectangle,draw=none,fill=none,midway,auto] {\scriptsize $\leq\!2$} (3);
\draw (3) to node[rectangle,draw=none,fill=none,midway,auto] {\scriptsize $\leq\!2$} (4);
\draw (4) to node[rectangle,draw=none,fill=none,midway,auto] {\scriptsize $\leq\!2$} (6);
\draw (6) to node[rectangle,draw=none,fill=none,midway,auto] {\scriptsize $\leq\!2$} (8);
\draw (8) to node[rectangle,draw=none,fill=none,midway,auto] {\scriptsize $\leq\!2$} (10);
\draw (10) to node[rectangle,draw=none,fill=none,midway,auto] {\scriptsize $\leq\!3$} (11);
\draw (0) -- (0,1) --node[rectangle,draw=none,fill=none,midway,auto] {\scriptsize $\leq\!1$} (8,1) -> (8);
\draw (0) -- (0,-0.5) --node[rectangle,draw=none,fill=none,pos=0.3,below] {\scriptsize $\leq\!1$} (1.90,-0.5) -> (2);
\draw (2) -- (2,-0.5) --node[rectangle,draw=none,fill=none,midway,below] {\scriptsize $\leq\!1$} (6,-0.5) -> (6);
\draw (1) -- (1,0.5) --node[rectangle,draw=none,fill=none,midway,auto] {\scriptsize $\leq\!1$} (3,0.5) -> (3);
\draw (1) -- (1,-1) --node[rectangle,draw=none,fill=none,midway,below] {\scriptsize $\leq\!1$} (10,-1) -> (10);
\draw (4) -- (4,0.5) --node[rectangle,draw=none,fill=none,midway,auto] {\scriptsize $\leq\!1$} (10,0.5) -> (10);
\end{tikzpicture}}
\subfigure[\label{fig:flow-c}]{
\begin{tikzpicture}[scale=1,->]
\draw[clip,draw=none] (-1.35,-1.3) rectangle (11.3,1.8);
\node[circle,draw=black,inner sep=-2pt,fill=white,minimum size=5mm] (-1) at (-1,0) {\footnotesize $-1$};
\node[circle,draw=black,inner sep=-2pt,fill=white,minimum size=5mm] (0) at (0,0) {\footnotesize $0$};
\node[circle,draw=black,inner sep=-2pt,fill=white,minimum size=5mm] (1) at (1,0) {\footnotesize $1$};
\node[circle,draw=black,inner sep=-2pt,fill=white,minimum size=5mm] (2) at (2,0) {\footnotesize $2$};
\node[circle,draw=black,inner sep=-2pt,fill=white,minimum size=5mm] (3) at (3,0) {\footnotesize $3$};
\node[circle,draw=black,inner sep=-2pt,fill=white,minimum size=5mm] (4) at (4,0) {\footnotesize $4$};
\node[circle,draw=black,inner sep=-2pt,fill=white,minimum size=5mm] (6) at (6,0) {\footnotesize $6$};
\node[circle,draw=black,inner sep=-2pt,fill=white,minimum size=5mm] (8) at (8,0) {\footnotesize $8$};
\node[circle,draw=black,inner sep=-2pt,fill=white,minimum size=5mm] (10) at (10,0) {\footnotesize $10$};
\node[circle,draw=black,inner sep=-2pt,fill=white,minimum size=5mm] (11) at (11,0) {\footnotesize $11$};
\draw[very thick] (-1) to node[rectangle,draw=none,fill=none,midway,auto] {\scriptsize $3$} (0);
\draw[very thick] (0) to node[rectangle,draw=none,fill=none,midway,auto] {\scriptsize $2$} (1);
\draw (1) to node[rectangle,draw=none,fill=none,midway,auto] {\scriptsize $0$} (2);
\draw (2) to node[rectangle,draw=none,fill=none,midway,auto] {\scriptsize $0$} (3);
\draw[very thick] (3) to node[rectangle,draw=none,fill=none,midway,auto] {\scriptsize $1$} (4);
\draw (4) to node[rectangle,draw=none,fill=none,midway,auto] {\scriptsize $0$} (6);
\draw[very thick] (6) to node[rectangle,draw=none,fill=none,midway,auto] {\scriptsize $1$} (8);
\draw[very thick] (8) to node[rectangle,draw=none,fill=none,midway,auto] {\scriptsize $1$} (10);
\draw[very thick] (10) to node[rectangle,draw=none,fill=none,midway,auto] {\scriptsize $3$} (11);
\draw (0) -- (0,1) --node[rectangle,draw=none,fill=none,midway,auto] {\scriptsize $0$} (8,1) -> (8);
\draw[very thick] (0) -- (0,-0.5) --node[rectangle,draw=none,fill=none,pos=0.3,below] {\scriptsize $1$} (1.90,-0.5) -> (2);
\draw[very thick] (2) -- (2,-0.5) --node[rectangle,draw=none,fill=none,midway,below] {\scriptsize $1$} (6,-0.5) -> (6);
\draw[very thick] (1) -- (1,0.5) --node[rectangle,draw=none,fill=none,midway,auto] {\scriptsize $1$} (3,0.5) -> (3);
\draw[very thick] (1) -- (1,-1) --node[rectangle,draw=none,fill=none,midway,below] {\scriptsize $1$} (10,-1) -> (10);
\draw[very thick] (4) -- (4,0.5) --node[rectangle,draw=none,fill=none,midway,auto] {\scriptsize $1$} (10,0.5) -> (10);
\end{tikzpicture}}
\caption{An example for the reduction of Problem~\ref{prob:decision-version} to a max flow problem. In Fig.~\ref{fig:flow-a} an instance $(S,k,t)$ of Problem~\ref{prob:decision-version} consisting of 5 intervals, where we assume $k = 3$ and $t = 1$. One solution is obtained by removing the interval $[0,8)$. In Fig.~\ref{fig:flow-b} the graph $G_{S,k,t}$ whose arcs are labeled by capacities. In Fig.~\ref{fig:flow-b} a max-flow of value $3$ in $G_{S,k,t}$; the arc labels now indicate their flow value. Arc $(0,8)$ has flow value 0.\label{fig:flow-reduction}}
\end{figure}
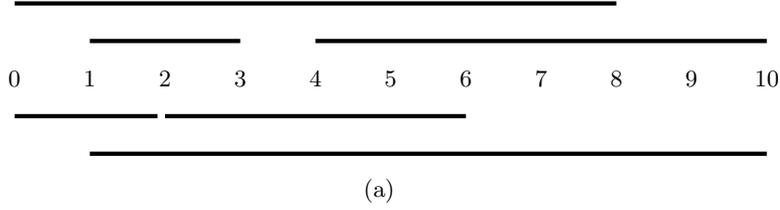
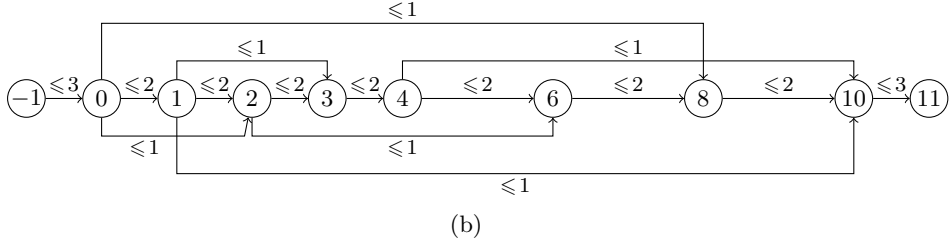
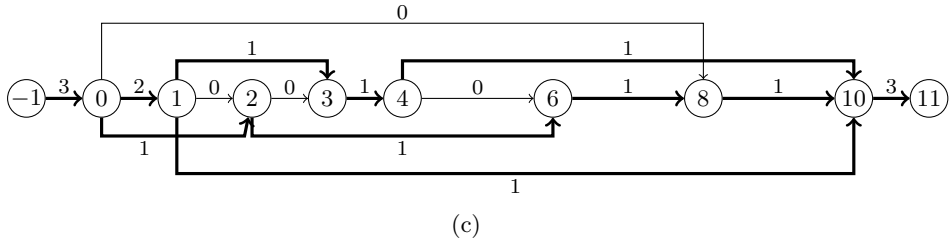

Theorem~\ref{thm:flow-reduction} below shows that computing the max-flow on $G_{S,k,t}$ is equivalent to solving Problem~\ref{prob:decision-version}. The main idea is that the flow passing through an interval arc is equivalent to the selection of that arc in the solution $S'$. The capacity $k-t$ imposed on the backbone arcs not incident to $s-1$ or to $f+1$ implies that for any $p \in [s,f)$ we have at most $k-t$ intervals not covering $p$, thus at least $t$ intervals covering $p$.

\begin{theorem}
Problem~\ref{prob:decision-version} admits a solution on an instance $(S,k,t)$ if and only if the max-flow in $G_{S,k,t}$ has value $k$, and this solution can be retrieved from any integral max-flow on $G_{S,k,t}$. 
\label{thm:flow-reduction}
\end{theorem}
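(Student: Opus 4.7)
The plan is to prove both directions of the equivalence by exhibiting an explicit correspondence between feasible solutions $S'$ of Problem~\ref{prob:decision-version} and integral flows of value $k$ on $G_{S,k,t}$, under which $[s_i,f_i) \in S' \iff \f(s_i,f_i)=1$. The governing invariant is that on each backbone arc $(u,v)$ the flow equals $k - c_{u,v}$, where $c_{u,v} = |\{[s_i,f_i)\in S' : s_i \leq u,\ f_i \geq v\}|$. Note that $c_{u,v}$ is well-defined as the common value of $\cov_{S'}(p)$ for every $p \in [u,v)$, since $u$ and $v$ are consecutive in the vertex set and no interval endpoint lies strictly between them.

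For the ``only if'' direction, starting from a feasible $S'$, I would define $\f(s_i,f_i) = 1$ for $[s_i,f_i) \in S'$ (and $0$ otherwise), and $\f(u,v) = k - c_{u,v}$ on each backbone arc. Then I would verify: (i) nonnegativity and the capacity bounds on backbone arcs, where the interior bound $\f(u,v) \leq k-t$ follows from $\mincov(S') \geq t$ and the boundary bound $\f(u,v) \leq k$ follows from $\maxcov(S') \leq k$ (the boundary arcs have $c_{u,v} = 0$); (ii) flow conservation at each internal vertex $v$ with neighbors $u,w$, which reduces to the identity $c_{v,w} - c_{u,v} = |\{[v,f_i) \in S'\}| - |\{[s_i,v) \in S'\}|$, obtained by splitting the intervals crossing $(u,v)$ into those ending at $v$ and those continuing past $v$, and those crossing $(v,w)$ analogously; (iii) the flow value is $\f(s-1,s) = k - 0 = k$.

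For the ``if'' direction, given any integral max-flow $\f$ of value $k$, I would set $S' = \{[s_i,f_i) \in S : \f(s_i,f_i)=1\}$. For any $p \in [s,f)$ contained in backbone arc $(u,v)$, the cut separating the source-side vertices $\{x \leq u\}$ from the sink-side vertices $\{x \geq v\}$ is crossed precisely by $(u,v)$ and by the interval arcs $(s_i,f_i)$ with $s_i \leq p < f_i$. Since the net flow across every $s$--$t$ cut equals the flow value, $\f(u,v) + \cov_{S'}(p) = k$. The capacity constraints then give $\cov_{S'}(p) \geq t$ on interior arcs (from $\f(u,v) \leq k-t$) and $\cov_{S'}(p) \leq k$ on boundary arcs (from $\f(u,v) \geq 0$), yielding $\mincov(S') \geq t$ and $\maxcov(S') \leq k$. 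The upper bound ``max-flow $\leq k$'' is trivial since the single arc $(s-1,s)$ out of the source has capacity $k$; hence the existence of a solution is equivalent to attaining flow value exactly $k$.

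The main obstacle is the bookkeeping around vertices that are simultaneously start-points of some intervals and end-points of others: flow conservation in the forward direction depends on carefully partitioning intervals crossing consecutive backbone arcs into ``continuing'', ``starting at $v$'', and ``ending at $v$'' classes. A secondary subtlety is the asymmetric treatment of the boundary backbone arcs, which carry capacity $k$ rather than $k-t$; this is essential because the coverage immediately inside $s$ or $f$ can be arbitrarily small, so the $\mincov \geq t$ constraint is effectively imposed only on the interior of $[s,f)$.
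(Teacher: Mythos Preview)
Your proposal is correct and follows essentially the same strategy as the paper: exhibit the bijection between feasible $S'$ and integral flows of value $k$ via $\f(s_i,f_i)=1 \iff [s_i,f_i)\in S'$ and $\f(u,v)=k-\cov_{S'}(p)$ on backbone arcs. The one notable difference is in the reverse direction: the paper argues locally from flow conservation that the coverage at $p$ equals $k$ minus the backbone flow, whereas you obtain the identity $\f(u,v)+\cov_{S'}(p)=k$ globally from the flow--cut equality across the cut $\{x\leq u\}\,/\,\{x\geq v\}$; your version is arguably cleaner and makes the paper's terse ``because $\f$ has value $k$'' explicit.
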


\begin{proof}
We prove the forward implication first. Let $S' \subseteq S$ be a solution to an instance $(S,k,t)$ of Problem~\ref{prob:decision-version}. See also Fig.~\ref{fig:flow-c} for an example. We construct a flow $\f$ in $G_{S,k,t}$ by first assigning $\f(s-1,s) = \f(f,f+1) = k$. Since the capacity of these two arcs is $k$, and since there is no other arc out-going from $s-1$ or in-coming to $f+1$, this will imply that $\f$ is a max-flow in $G_{S,k,t}$. For any interval arc $(s_i,f_i)$, we set $f(s_i,f_i)$ to 1 if and only if the corresponding interval $[s_i,f_i)$ belongs to $S'$. Finally, let $(i,j)$ be any backbone arc different from $(s-1,s)$ and $(f,f+1)$.
Since all $p \in [i,j)$ are covered by the same number of intervals in $S'$, say $t_{i,j}$, and $t_{i,j} \leq t$, we set $f(i,j) = k - t_{i,j}$. 

So far we have obtained that $\f$ satisfies the capacity constraints on $G_{S,k,t}$. It remains to show that the flow conservation property holds for every vertex other than the source and the sink. Let $i$ be such a vertex, and let $(i,j)$ be its out-going backbone arc, and let $(\ell,i)$ be its in-coming backbone arc. The value of the flow out-going from $i$ equals $k - t_{i,j}$ plus the number of intervals in $S'$ with $i$ as left extremity. This equals $k$ minus the number of intervals strictly covering $i$. Similarly, the value of the flow in-coming to $i$ equals $k - t_{\ell,i}$ plus the number of intervals ending at $i$. This also equals $k$ minus the number of intervals strictly covering $i$, thus showing that the flow conservation property holds for $i$.

For the reverse implication, let $\f$ be an integral max-flow in $G_{S,k,t}$ of value $k$ (such an integral flow exists because all capacities are integers). 
The solution $S' \subseteq S$ to problem Problem~\ref{prob:decision-version} consists of those intervals $[s_i,f_i)$ such that the corresponding interval arc $(s_i,f_i)$ has flow value $1$. Let $(i,j)$ be an arbitrary backbone arc in $G_{S,k,t}$ not incident to the source or to the sink. We need to show that all $p \in [i,j)$ are covered by at least $t$ intervals of $S'$ and by at most $k$ intervals of $S'$. Point $p$ is covered by at most $k$ intervals because $f$ has value $k$. Point $p$ is covered by at least $t$ intervals because the capacity of the backbone arc $(i,j)$ is $k-t$.
\end{proof}

Observe that $G_{S,k,t}$ has $O(n)$ vertices and arcs. Thus, the specialized max-flow algorithm from~\cite{Orlin:2013:MFO:2488608.2488705} applies, leading to the following corollary.

\begin{corollary}
Problem~\ref{prob:decision-version} is solvable in time $O(n^2/\log n)$ by solving the max-flow problem on $G_{S,k,t}$.
\end{corollary}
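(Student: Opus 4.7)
The plan is to reduce the corollary to a purely bookkeeping calculation on top of Theorem~\ref{thm:flow-reduction}. Since the theorem already establishes that an integral max-flow in $G_{S,k,t}$ either reaches value $k$ (from which we read off $S'$ as the set of intervals whose interval arc carries one unit of flow) or does not (in which case no feasible $S'$ exists), the only remaining work is to bound the size of $G_{S,k,t}$ and invoke a fast max-flow subroutine.

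First I would count vertices. By construction, $V(G_{S,k,t}) \subseteq \{s-1,\,f+1\}\cup\{s_i,f_i \wt i\in\{1,\dots,n\}\}$, so $|V(G_{S,k,t})|\le 2n+2=O(n)$. Next I would count arcs. The backbone arcs form a single chain through the vertices of $V(G_{S,k,t})$ in sorted order, contributing at most $|V(G_{S,k,t})|-1 = O(n)$ arcs. Each of the $n$ input intervals contributes exactly one interval arc. Hence $|E(G_{S,k,t})| = O(n)$ as well. So $G_{S,k,t}$ is a sparse graph on $O(n)$ vertices with $O(n)$ arcs and integer capacities bounded by $k\le n$.

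I would then apply the max-flow algorithm of Orlin~\cite{Orlin:2013:MFO:2488608.2488705}, which runs in $O(VE/\log V)$ on a graph with $V$ vertices and $E$ arcs when $E=O(V)$, and produces an integral optimal flow whenever capacities are integers. Plugging $V=E=O(n)$ yields an integral max-flow in $G_{S,k,t}$ in time $O(n^2/\log n)$. Checking whether this max-flow has value $k$ takes $O(1)$, and, if so, scanning the $n$ interval arcs to collect those carrying one unit of flow produces $S'$ in $O(n)$ additional time. By Theorem~\ref{thm:flow-reduction}, $S'$ is a valid answer to Problem~\ref{prob:decision-version}; if instead the max-flow value is strictly less than $k$, the same theorem guarantees that no feasible $S'$ exists.

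There is no real obstacle here; the only delicate point is to check that the hypotheses of the Orlin bound are met, namely integrality of capacities (immediate) and the asymptotic identity between the number of vertices, the number of arcs, and $n$ (immediate from the two counts above). Combining these gives the claimed $O(n^2/\log n)$ running time.
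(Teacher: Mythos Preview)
Your argument is correct and matches the paper's own reasoning: the paper simply observes that $G_{S,k,t}$ has $O(n)$ vertices and arcs and then invokes Orlin's sparse max-flow bound, which is exactly what you do with the additional bookkeeping spelled out. The only cosmetic remark is that Orlin's sparse bound is usually stated directly as $O(n^2/\log n)$ when $m=O(n)$ rather than as $O(VE/\log V)$, but with $E=\Theta(V)$ these coincide.
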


We can use the above corollary to solve the maximization problem (Problem~\ref{prob:main-problem}) by a doubling/binary search technique as follows. We apply the corollary for $t=1$, $t=2$, $t=4$, $\ldots$, until finding $t^+$ where for $t^-:=2t^+$ the decision problem does no longer have a solution. Binary search on decision problem instances with $t\in [t^+..t^-]$ gives the minimum coverage $t=\mathtt{OPT}$ of the optimal solution to the maximization problem.

\begin{corollary}
Problem~\ref{prob:main-problem} is solvable in time $O(n^2\log \mathtt{OPT}/\log n)$, where $\mathtt{OPT}$, $\mathtt{OPT}\leq k$, is the minimum coverage of an optimal solution.
\label{cor:main-problem-complexity}
\end{corollary}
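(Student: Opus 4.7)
The plan is to implement the doubling-then-binary-search strategy sketched in the paragraph preceding the corollary, treating the decision procedure of the previous corollary as a black box that, given $(S,k,t)$, either outputs a witness $S' \subseteq S$ with $\maxcov(S')\leq k$ and $\mincov(S')\geq t$, or declares infeasibility, in time $O(n^2/\log n)$.

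First, I would record the one simple monotonicity fact that drives everything: if $(S,k,t)$ admits a witness $S'$, then so does $(S,k,t')$ for every $t' \leq t$, with the same $S'$, because the constraint $\maxcov(S') \leq k$ is independent of $t$ and $\mincov(S') \geq t \geq t'$. Thus the set of feasible values of $t$ is exactly $\{1,2,\dots,\mathtt{OPT}\}$, so the decision predicate is monotone in $t$.

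Next, in the doubling phase I would call the decision procedure on $(S,k,2^i)$ for $i = 0,1,2,\dots$ until receiving the first ``no'' at some index $i^\star$; set $t^- := 2^{i^\star}$ and $t^+ := 2^{i^\star - 1}$ (taking $t^+ := 0$ if $i^\star = 0$). Monotonicity guarantees $t^+ \leq \mathtt{OPT} < t^-$, and since $t^- \leq 2\,\mathtt{OPT}$, this phase issues $O(\log \mathtt{OPT})$ calls. Then I would binary search the integer interval $[t^+, t^-)$ for the largest feasible $t$; by monotonicity this is exactly $\mathtt{OPT}$, and standard binary search issues another $O(\log(t^- - t^+)) = O(\log \mathtt{OPT})$ calls. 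The witness $S'$ returned by the last ``yes'' call is an optimal solution to Problem~\ref{prob:main-problem}.

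Summing, the procedure issues $O(\log \mathtt{OPT})$ calls to the decision procedure, each of cost $O(n^2/\log n)$, for a total of $O(n^2 \log \mathtt{OPT}/\log n)$. I do not anticipate a serious obstacle: the only nontrivial point to nail down is the monotonicity observation above, which simultaneously justifies the doubling termination rule and the correctness of binary search. Everything else is bookkeeping on the indices $t^+, t^-$.
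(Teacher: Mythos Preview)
Your proposal is correct and follows essentially the same doubling-then-binary-search strategy outlined in the paper just before the corollary. You add a useful explicit statement of the monotonicity of the decision predicate in $t$ and handle the edge case $i^\star = 0$, but the core argument is identical.
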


\section{Tailored max-flow algorithm\label{sec:tailored-flow}}

Now we give a tailored max-flow algorithm to our problem to achieve a better running time when $k=o(n/\log n)$, based on the Ford-Fulkerson~\cite{FF} max-flow algorithm. Recall that this classical textbook algorithm (see e.g. \cite[Section 26.2]{CLRS}) finds an augmenting path in the residual network and adjusts the flow network along the same path so as to increase the total flow. When there is no augmenting path left in the residual network, the flow found is maximum. Assuming integral capacities (so that the flow increases at least by one unit each augmentation step), the running time is $O(|E||\f^*|)$, where $E$ is the set of arcs of the flow network and $|\f^*|$ is the value of the maximum flow.

Now consider running Ford-Fulkerson on an instance resulting from the reduction of Sec.~\ref{sec:max-flow}. We observe that flow $k-t$ can be sent through the backbone arcs from source to sink. Thus, we can directly initialize the network with a flow of value $k-t$, and start running Ford-Fulkerson from that initial feasible flow. We need at most $t\leq k$ augmentation steps each requiring $O(n)$ time, and thus we obtain the following result.

\begin{theorem}
Problem~\ref{prob:decision-version} is solvable in time $O(nk)$.
\end{theorem}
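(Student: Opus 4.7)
The plan is to formalize the algorithmic sketch already outlined just before the theorem: build the reduction graph $G_{S,k,t}$ from Section~\ref{sec:max-flow}, warm-start with an obvious feasible flow, and then run Ford--Fulkerson to completion, carefully bounding both the per-augmentation cost and the number of augmentations in terms of $n$ and $k$.

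First, I would observe that $G_{S,k,t}$ has $O(n)$ vertices and $O(n)$ arcs (the backbone path plus one arc per input interval), so an augmenting path in the residual network can be found by BFS/DFS in $O(n)$ time. Next, I would exhibit an initial feasible flow $\f_0$ of value exactly $k-t$ by sending $k-t$ units along the entire backbone path from source $s-1$ to sink $f+1$ and zero on every interval arc. Feasibility holds because the two extremal backbone arcs have capacity $k \geq k-t$, the remaining backbone arcs have capacity exactly $k-t$, and flow conservation is immediate since the backbone is a simple path; in particular $\f_0$ can be constructed in $O(n)$ time.

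Starting from $\f_0$ and running Ford--Fulkerson is correct by the standard max-flow/augmenting-path theorem (which applies to any integral starting feasible flow, not only the zero flow). By Theorem~\ref{thm:flow-reduction}, the instance $(S,k,t)$ of Problem~\ref{prob:decision-version} is feasible iff the max-flow value is $k$, so at most $k - (k-t) = t \leq k$ further augmentations are required, each increasing the flow by at least one unit because all capacities are integers. Thus the total running time is bounded by $O(n) + O(n) \cdot t = O(nk)$. Once the algorithm terminates, I either report infeasibility (if the final flow value is strictly less than $k$) or output $S'$ as the set of intervals whose interval arcs carry one unit of flow, exactly as in the proof of Theorem~\ref{thm:flow-reduction}; extracting $S'$ takes $O(n)$ time.

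The only real subtlety I would be careful about is arguing that starting from $\f_0$ is safe: I would briefly note that the correctness of Ford--Fulkerson depends on the invariant that the current flow is feasible and that every augmenting path in the residual network yields a new feasible flow, both of which are preserved from any integral feasible starting flow. Apart from this, everything else is routine, and the dominant cost is indeed the at most $k$ augmenting-path searches of $O(n)$ time each.
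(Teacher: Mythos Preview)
Your proposal is correct and essentially identical to the paper's own argument: warm-start Ford--Fulkerson on $G_{S,k,t}$ with the obvious flow of value $k-t$ along the backbone, then perform at most $t\le k$ augmentations of $O(n)$ cost each. The extra care you take in justifying the warm start and the extraction of $S'$ is fine but goes slightly beyond what the paper spells out.
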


Using again the above doubling/binary search algorithm on the decision problem we solve the maximization problem:

\begin{corollary}
Problem~\ref{prob:main-problem} is solvable in time $O(n k \log \mathtt{OPT})$, where $\mathtt{OPT}$, $\mathtt{OPT}\leq k$, is the minimum coverage of an optimal solution.
\label{cor:main-problem-tailored-complexity}
\end{corollary}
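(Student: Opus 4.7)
The plan is to reduce the maximization problem to $O(\log \mathtt{OPT})$ instances of Problem~\ref{prob:decision-version}, exactly as was done in Corollary~\ref{cor:main-problem-complexity}, but now using the $O(nk)$-time decision procedure from the preceding theorem instead of the generic max-flow algorithm. Since each decision call costs $O(nk)$, the total running time will be $O(nk \log \mathtt{OPT})$ as claimed.

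First I would observe the monotonicity that makes binary/doubling search correct: if $(S,k,t)$ admits a feasible $S'$ (so $\mincov(S') \geq t$ and $\maxcov(S') \leq k$), then the same $S'$ witnesses feasibility of $(S,k,t')$ for every $t' \leq t$. Hence the set of values $t$ for which Problem~\ref{prob:decision-version} has a solution is a prefix $\{1, 2, \dots, \mathtt{OPT}\}$ of the positive integers (with $\mathtt{OPT} \leq k$ because $\maxcov(S') \leq k$ forces $\mincov(S') \leq k$). Locating $\mathtt{OPT}$ thus reduces to finding the largest $t$ at which the decision problem is feasible.

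Next I would run the doubling phase: call the decision procedure for $t = 1, 2, 4, \dots, 2^i, \dots$ until the first index $i$ at which infeasibility is reported. Let $t^{+} = 2^{i-1}$ and $t^{-} = 2^{i}$; then $t^{+} \leq \mathtt{OPT} < t^{-}$. Since $\mathtt{OPT} \leq k$, at most $\lceil \log_2(\mathtt{OPT}+1) \rceil + 1 = O(\log \mathtt{OPT})$ doubling calls are made. I would then binary-search within $[t^{+}, t^{-}]$ to pin down $\mathtt{OPT}$, which costs another $O(\log(t^{-} - t^{+})) = O(\log \mathtt{OPT})$ decision calls. The feasible set $S'$ returned on the final successful call with $t = \mathtt{OPT}$ is an optimal solution to Problem~\ref{prob:main-problem}. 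Multiplying the $O(\log \mathtt{OPT})$ calls by the $O(nk)$ cost per call yields the announced bound.

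There is no real obstacle; the only minor subtlety worth verifying is that invoking the decision procedure with a value of $t$ strictly greater than $k$ is never needed (so the per-call bound $O(nk)$ is meaningful throughout), and this is guaranteed by $\mathtt{OPT} \leq k$ together with the fact that the doubling phase stops as soon as it crosses $\mathtt{OPT}$. Everything else is a direct reuse of the argument behind Corollary~\ref{cor:main-problem-complexity}.
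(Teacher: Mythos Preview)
Your proposal is correct and follows exactly the paper's approach, which simply reuses the doubling/binary-search scheme of Corollary~\ref{cor:main-problem-complexity} with the $O(nk)$ decision routine in place of the generic max-flow. One tiny imprecision: the doubling phase does make one call with $t=2^i>\mathtt{OPT}$, which can exceed $k$ when $\mathtt{OPT}$ is close to $k$; this is harmless since any instance with $t>k$ is trivially infeasible (so that call costs $O(1)$), but it is not literally true that such a call is ``never needed'' as you state.
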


\section{An $O(n \log n)$ time approximation algorithm\label{sec:2-approximation}}

Since for large $k$ the best complexity we achieve for our interval scheduling problem is still almost quadratic, we also study a way to find approximate solutions: In this section we present an approximation algorithm running in time $O(n\log n)$, with approximation ratio $\frac{k}{\lfloor k/2 \rfloor}$. For $k$ even, this is a 2-approximation algorithm. First, we extend some concepts introduced in Section~\ref{sec:pruning}, and then make some preliminary observations. Then we describe the algorithm, and finally show how it can be implemented so that it achieves the stated running time.

\begin{sloppypar}
Let us extend the definition of minimum coverage to intervals, so that $\mincov([s_i,f_i)) = min_{p \in [s_i,f_i)}\cov(p)$. 
When an interval has minimum coverage smaller or equal to $\lfloor k/2 \rfloor$ 
we say that such an interval is \emph{crucial}; otherwise, we call it 
\emph{expendable}. The following result is the key idea behind the approximation algorithm.
\end{sloppypar}

\begin{lemma}
  Given an input $(S,k)$ to the interval scheduling maximizing minumum coverage problem, and a point $p$,
  if $\cov_S(p) = k' > k$, there at least $k'-k$ intervals covering $p$
  that are expendable.
\label{lemma:crucial_intervals}
\end{lemma}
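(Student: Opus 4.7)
The plan is to argue by contradiction. Suppose fewer than $k'-k$ intervals covering $p$ are expendable; equivalently, at least $k+1$ intervals covering $p$ are crucial. Call these intervals $I_1 = [s_1,f_1), \dots, I_m = [s_m,f_m)$ with $m \geq k+1$. For each such crucial interval $I_i$, I would fix a witness point $q_i \in I_i$ with $\cov(q_i) \leq \lfloor k/2 \rfloor$, which exists by definition of crucial. The first useful observation is that no $q_i$ can equal $p$: indeed $\cov(p) = k' > k \geq \lfloor k/2 \rfloor$, so $p$ is not a valid witness. Hence every $q_i$ is either strictly left of $p$ or strictly right of $p$.

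Next I would apply a pigeonhole argument: since we have at least $k+1$ witnesses all lying in the set $(-\infty,p) \cup (p,+\infty)$, at least $\lceil (k+1)/2 \rceil = \lfloor k/2 \rfloor + 1$ of them lie on the same side of $p$. Without loss of generality, assume at least $\lfloor k/2 \rfloor + 1$ witnesses lie strictly to the left of $p$, and let $q^*$ be the rightmost such witness.

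The key geometric step is to verify that $q^*$ is in fact covered by each of the $\lfloor k/2 \rfloor + 1$ intervals whose witnesses lie to the left of $p$. For any such interval $I_i = [s_i,f_i)$, both $q_i$ and $p$ belong to $[s_i,f_i)$, so $s_i \leq q_i \leq q^* < p < f_i$, which gives $q^* \in [s_i,f_i) = I_i$. (Here the half-open definition of the intervals is used carefully: $p < f_i$ follows from $p \in [s_i,f_i)$, and $q^* < p$ since $q^*$ is strictly left of $p$.)

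Combining, $\cov(q^*) \geq \lfloor k/2 \rfloor + 1$, which contradicts the fact that $q^*$ was chosen as a witness with $\cov(q^*) \leq \lfloor k/2 \rfloor$. The main obstacle is simply the pigeonhole/rightmost-witness selection together with the interval-containment check; once those are set up correctly, the contradiction is immediate. I expect no complications for small $k$ (e.g.\ $k=1$, where \emph{every} interval in $S$ is expendable, so the "at least $k+1$ crucial" hypothesis already fails and the contradiction is reached before invoking pigeonhole).
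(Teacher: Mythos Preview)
Your proof is correct and follows essentially the same approach as the paper: contradiction via witness/supporting points, pigeonhole to one side of $p$, and selection of the extremal witness to force a coverage lower bound. Your version is in fact slightly cleaner than the paper's, since you correctly negate the claim as ``at least $k+1$ crucial intervals cover $p$'' (the paper assumes all $k'$ of them are crucial, which is stronger than the negation requires), and you make the identity $\lceil (k+1)/2 \rceil = \lfloor k/2 \rfloor + 1$ explicit so that the final inequality lines up exactly with the definition of \emph{crucial}.
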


\begin{proof}
We proceed by contradiction:
Let us assume that there are $k' > k$ intervals that cover $p$ and that all of them are \emph{crucial}.
For every \emph{crucial} interval, there is at least one point contained in it that has coverage smaller than
or equal to $\lfloor k/2 \rfloor$. Let us call these \emph{supporting points}. 
It is not possible that $p$ is a supporting point, so those must be either larger or smaller 
than $p$. If we separate them among those that are larger than $p$ and those smaller than $p$, one 
of those sets must have at least $\lceil \frac{k+1}{2} \rceil$ intervals. 
Without loss of generality, let us assume that there are at least $\lceil \frac{k+1}{2} \rceil$ intervals
that have a supporting point larger than $p$.
Among those, if we inspect the interval that have the smallest supporting point 
(i.e. the one that is  closest to $p$), this supporting point
must be covered by the totality of the $\lceil \frac{k+1}{2} \rceil$ intervals that have the supporting
point in the same direction. Therefore the coverage of such supporting point must be equal to or
greater than $\lceil \frac{k'}{2} \rceil > \lfloor \frac{k}{2}  \rfloor$, which contradicts the fact that such an interval is
crucial.
\end{proof}


The algorithm proceeds as follows. The original set of intervals $S$ 
will be treated as the set of current candidates.
For every interval, we need to compute its maximum and minimum coverage.
This allows to detect \emph{crucial} intervals, which will never be removed from $S$.
Then, the intervals delimiters are traversed from left to right.
Whenever a delimiter $p$ is found to have coverage $k'$ greater than $k$, then
$k'-k$ intervals covering  $p$ must be removed from the set of candidates.
By Lemma~\ref{lemma:crucial_intervals}, we know that among the $k'$ intervals covering $p$, there are
at least $k'-k$ intervals that are \emph{expendable}, so we can delete those safely.
For every removed interval $[s_i,f_i)$, we need to update the coverage of all the delimiters
that are contained in $[s_i,f_i)$.
It is easy to see that the optimal solution to Problem $1$ is bounded by $k$, 
and, because the algorithm never removes \emph{crucial} intervals, 
that the approximation ratio is $\rho = \frac{k}{\lfloor k/2 \rfloor}$.

Now we show the data structures that allow us to run the algorithm in $O(n \log n)$ time. We build a perfect binary search tree with delimiters of the intervals as leaves. 
Initially, each leaf stores the number of intervals overlapping it, i.e.~the coverage of the delimiter. 
This information can be computed by a sweep from left to right through the delimiters, 
incrementing a counter on the start of an interval and decrementing the counter on the end of an interval, and storing the intermediate counter values to the leaves.
We regard the tree as a one-dimensional range search tree \cite[Section 5.1]{BKOS98}, such 
that internal nodes store keys to allow search towards the leaves by the delimiter. 
We annotate the tree with maximum and minimum of leaf coverages inside each subtree. For leaves, the maximum and minimum correspond to their stored coverage values. 
For internal nodes these values can be computed bottom-up. 
In addition, we annotate each node of the tree with a \emph{balance} counter, initially set to $0$, to support deletion of intervals, as follows.

The approximation algorithm goes through the intervals in the order of their start points. At each such \emph{query} interval $q$, we locate 
a set $V(q)$ of $O(\log n)$ internal nodes that form a partition of the query interval as in \cite[Section 5]{BKOS98}. 
The maximum and minimum coverage encountered at the query interval can be computed by taking 
$\max_{v\in V(q)} v.\mathtt{maxcov}+v.\mathtt{balance}$ and $\min_{v\in V(q)} v.\mathtt{mincov}+v.\mathtt{balance}$, respectively, where
$v.\mathtt{maxcov}$ and $v.\mathtt{mincov}$ are the minimum and maximum coverages of the corresponding subtrees, 
and $v.\mathtt{balance}$, mentioned above, stores a value indicating how much each coverage inside the subtree has changed during earlier steps of the algorithm.
These obtained maximum and minimum coverage values decide if $q$ is deleted or not. If $q$ is deleted, we need to update the coverages in the tree. 
This is done by updating $v.\mathtt{balance}=v.\mathtt{balance}-1$ for all $v \in V(q)$. We propagate the effect of these decrements up to the root, by recomputing maxima and minima on the affected paths, considering 
$v.\mathtt{maxcov}+v.\mathtt{balance}$ and $v.\mathtt{mincov}+v.\mathtt{balance}$ when computing those values. 
Finally, to guarantee that all $v.\mathtt{balance}$ values are maintained correctly, we need to propagate those values down in the tree when querying an interval: 
during the location of a delimiter of a query interval, and moving from parent $p$ of $v$ to $v$, we set $v.\mathtt{balance}=v.\mathtt{balance}+p.\mathtt{balance}$, $w.\mathtt{balance}=w.\mathtt{balance}+p.\mathtt{balance}$, and 
$p.\mathtt{balance}=0$, where $w$ is the other child of $p$. Processing each interval takes $O(\log n)$ time, which proves the running time claim. We have thus obtained the following result.

\begin{theorem}
There is an $O(n \log n)$ time approximation algorithm to Problem~\ref{prob:main-problem} that finds a solution with minimum coverage at least
$\frac{\lfloor k/2 \rfloor}{k}\mathtt{OPT}$, where $\mathtt{OPT}$ is the minimum coverage of an optimal solution.
\label{thm:2approx}
\end{theorem}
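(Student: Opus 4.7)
My plan is to establish three things: (a) feasibility, $\maxcov(S')\le k$; (b) the approximation guarantee $\mincov(S')\ge\frac{\lfloor k/2\rfloor}{k}\mathtt{OPT}$; and (c) the $O(n\log n)$ running time. Throughout I will view the algorithm as producing a sequence of sets $S_0,S_1,\ldots,S_n=S'$, where at step $i$ the interval $I_i$ (taken in order of start points) is deleted exactly when the current maximum coverage over $[s_{I_i},f_{I_i})$ exceeds $k$ and the current minimum coverage over the same range exceeds $\lfloor k/2\rfloor$.

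For (b) I would argue by a simple invariant. Each deletion is triggered by a minimum coverage strictly greater than $\lfloor k/2\rfloor$ over the deleted interval's range, i.e.\ at least $\lfloor k/2\rfloor+1$; after the deletion the coverages inside the range each drop by one while those outside are unchanged, so the minimum over the range remains at least $\lfloor k/2\rfloor$. A straightforward induction then gives $\cov_{S_t}(r)\ge\min(\cov_S(r),\lfloor k/2\rfloor)$ for every point $r$ and every time $t$, and plugging $t=n$ yields $\mincov(S')\ge\min(\mincov(S),\lfloor k/2\rfloor)$. Combining this with the trivial bounds $\mathtt{OPT}\le k$ and $\mathtt{OPT}\le\mincov(S)$ through a two-case split yields the claimed ratio.

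The hard part will be (a), which I would prove by contradiction. Suppose $\cov_{S'}(p)>k$ for some $p$. Applying Lemma~\ref{lemma:crucial_intervals} to the set $S'$ itself (the lemma is stated for an arbitrary input and therefore applies) produces an interval $I'\in S'$ covering $p$ that is expendable with respect to $S'$, meaning $\cov_{S'}(r)>\lfloor k/2\rfloor$ for every $r\in[s_{I'},f_{I'})$. Consider the step when $I'$ was processed. Since coverages only decrease over time and already $\cov_{S'}(p)>k$, the maximum coverage over $I'$'s range at that step must also have exceeded $k$, so the max-coverage side of the deletion criterion was satisfied. The only reason $I'$ could have escaped deletion is that the minimum coverage over its range at that step was at most $\lfloor k/2\rfloor$, witnessed by some $r\in[s_{I'},f_{I'})$. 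But then monotonicity of coverages gives $\cov_{S'}(r)\le\lfloor k/2\rfloor$, contradicting the expendability of $I'$ in $S'$.

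For (c) I would verify the invariant that, at any moment, the true current coverage of any leaf equals its stored base value plus the sum of $\mathtt{balance}$ values on the root-to-leaf path, and analogously that $v.\mathtt{maxcov}$ and $v.\mathtt{mincov}$ represent the true extrema of $v$'s subtree once these accumulated balances are added. A routine case analysis shows that each of the three primitive updates described, namely push-down on descent, decrement over the canonical partition $V(q)$, and recomputation of extrema along ancestor paths, preserves this invariant. Since $|V(q)|=O(\log n)$ and each of the $n$ intervals triggers $O(\log n)$ work for one query and possibly one deletion, the total running time is $O(n\log n)$, completing the proof.
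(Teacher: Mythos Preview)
Your argument is correct and tracks the paper's overall strategy: Lemma~\ref{lemma:crucial_intervals} underlies feasibility, the ``never delete a crucial interval'' invariant gives the ratio, and the lazily-propagated range tree gives the running time. The one genuine difference is in how you handle feasibility. The paper argues it at the level of the high-level sweep---at each over-covered delimiter $p$, Lemma~\ref{lemma:crucial_intervals} supplies $k'-k$ expendable intervals to discard---and then leaves the correspondence with the per-interval implementation (``these obtained maximum and minimum coverage values decide if $q$ is deleted'') implicit. You instead apply Lemma~\ref{lemma:crucial_intervals} to the \emph{final output} $S'$ and derive a contradiction from monotonicity of the coverages; this directly certifies the implemented version without needing to reconcile the two descriptions, and is arguably a tighter argument than the paper gives. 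Your treatment of~(b), making the invariant $\cov_{S_t}(r)\ge\min(\cov_S(r),\lfloor k/2\rfloor)$ explicit and closing with the two-case split against $\mathtt{OPT}\le k$ and $\mathtt{OPT}\le\mincov(S)$, is likewise a fuller version of the paper's one-line justification.
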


\section*{Acknowledgements}

We wish to thank Chao Xu for bringing to our attention the specialized flow algorithms that are suitable for the inputs resulting from our reduction.
 
This work was partially supported by the Academy of Finland grants 284598 to VM, VS, and DV, and 274977 to AT.

\bibliographystyle{plain}
\bibliography{biblio.bib}

\end{document}